\documentclass{ifacconf}

\usepackage{graphicx}      
\usepackage{natbib}        
\usepackage{amsmath}
\usepackage{amssymb}
\begin{document}

    \begin{frontmatter}

        \title{Resource-Constrained Shortest Path with Polytopic Reset Sets\thanksref{footnoteinfo}} 

        \thanks[footnoteinfo]{This work was supported by the Office of Naval Research (ONR) under grant N00014-25-1-2519.}

        \author[First]{Khaled Surur} 
        \author[Second]{and Melkior Ornik}

        \address[First]{Department of Electrical and Computer Engineering, University of Illinois Urbana-Champaign, Urbana, IL 61801 USA}
        \address[Second]{Department of Aerospace Engineering, University of Illinois Urbana-Champaign, Urbana, IL 61801 USA \\ (e-mail: surur2@illinois.edu, mornik@illinois.edu)}

        \begin{abstract}
            This paper investigates the problem of computing the shortest path between two states under resource constraints in environments with resource-replenishment regions. Namely, the length of the path is limited by a budget that can be restored within polytopic replenishment regions. We show that the optimal path in this problem exhibits a distinct geometric structure: it consists of straight-line segments, changes direction at the edges of replenishment regions, and enters a sequence of replenishment regions in a feasible-connected space. We propose an approach to solve the continuous problem in two steps: using a graph-based approach, followed by convex programming. First, we define a graph whose nodes are possible waypoints of feasible paths, and the edges are the Euclidean distances between these nodes. To obtain a discrete set of nodes that ensure a feasible and near-optimal solution, we utilize a wavefront algorithm. With a sufficiently small spacing between wavefronts, the solution of the shortest path problem on this graph yields the optimal sequence of polytopes to visit. Next, we use convex optimization on this sequence of polytopes to refine the solution to optimality. A numerical experiment is presented to demonstrate the effectiveness of the approach.
        \end{abstract}

        \begin{keyword}
            Convex Programming, Graph of Convex Sets, Resource-constrained Path Planning, Shortest Path Problem
        \end{keyword}

    \end{frontmatter}
    

    \section{Introduction}

        Resource-constrained shortest path problems arise in many engineering applications, including autonomous systems, optimal control, and operation research, where systems have limited budget on resources that are depleted throughout the operation. Application examples where resource-constrained path planning is significant include persistent aerial surveillance \citep{Zuo:2020}, transportation \citep{Tu:2020}, computer networks \citep{Chemodanov:2019}, and cooperative multi-agent systems \citep{Mondal:2025}. In robotics and mobile systems, a typical problem is to compute the shortest or fastest path to a target subject to resource constraints. Variants of this problem include minimizing exposure to adversaries by treating visibility time as a limited resource \citep{Kumar:2010} and incorporating localization uncertainty as a budget constraint in navigating environments denied by the global positioning system (GPS) \citep{Bopardikar:2015}.

        In realistic settings, the state transitions of a system are constrained by a resource budget (such as fuel, time, or capacity) which are depleted or diminished as control effort is exerted or the configuration of the system changes. Consequently, resources in the system may have to be renewed to prevent exhaustion.

        A large body of work addresses the resource-constrained shortest path problem using discrete graph-based models, where system states are represented as nodes. There exist heuristic tools to solve the shortest graph problems. For example, a refueling path planning problem for vehicles has been investigated using a modified \(A^*\) algorithm \citep{Zhao:2025}. Another line of work formulates routing problems using mixed-integer linear programming (MILP). For example, an energy-aware MILP has been developed for combinatorial optimization over routes to solve the optimal routing for an autonomous aircraft in the presence of refueling depots \citep{Sundar:2014}. Although these approaches are effective, they typically assume that resource replenishment occurs at discrete locations.

        An alternative approach to solving resource-constrained shortest path problems can be formulated in a continuous optimal control framework. For example, dynamic programming methods have been used to compute optimal policies under budget constraints with reset mechanisms by evaluating approximate cost functions \citep{Takei:2015}, while control-theoretic approaches such as control barrier functions have been used for autonomous systems with energy constraints to enforce resource feasibility and task persistency \citep{Notomista:2021}. However, these methods often suffer from high computational complexity or rely on discretization schemes that may limit scalability.

        Motion planning methods based on computational geometry provide additional tools to address shortest path problems, particularly in environments with geometric constraints \citep{deBerg:2013}. More recently, graphs of convex sets (GCSs) have emerged as a powerful framework for modeling continuous path planning problems, where nodes correspond to convex regions and edge costs are defined over continuous decision variables \citep{Kurtz:2023, Marcucci:2024, Garg:2025}. This framework enables the incorporation of complex geometric and dynamical constraints within a convex optimization setting to solve shortest path problems.

        Despite these advances, existing approaches either rely on discrete refueling nodes or lack scalable methods for handling resource replenishment over continuous settings. While GCS provides a powerful framework for continuous paths, the integration of cumulative resource constraints with budget reset mechanisms remains an open problem. In this paper, we address this gap by considering a generalized case of a continuous resource-constrained shortest path problem. Rather than discrete nodes or depots, resource replenishment occurs over convex regions in the state space, thus capturing a broader class of realistic scenarios. The contributions of this paper are twofold.
        \begin{itemize}
            \item We formulate the resource-constrained shortest path problem with replenishment sets under a GCS framework. In particular, we exploit key geometric structure and length-based budget constraint to reduce the problem to an optimal path computation over a region-distance graph constructed via pairwise convex set distances.
            \item We utilize a wavefront algorithm that generates candidate nodes to narrow the set of feasible paths. We then solve the problem in two steps: (i) we solve the shortest graph generated by the candidate nodes using Dijkstra method to obtain the optimal polytopic sequence the path will visit, then (ii) we apply convex programming on the sequence obtained to calculate exact solutions.
        \end{itemize}
        The resulting approach provides a computationally efficient solution with provable optimality.

        \textbf{Notations.} Throughout this paper, the mathematical operator \(\|\cdot\|\) denotes the Euclidean norm (2-norm). For a closed set \(\mathcal{S}\), the expression \(\partial \mathcal{S}\) denotes its boundary, with \(\partial \mathcal{S} \subseteq \mathcal{S}\). The operator \(\oplus\) denotes the Minkowski sum of two sets such that, for given two sets \(\mathcal{S}_1\) and \(\mathcal{S}_2\), \(\mathcal{S}_1 \oplus \mathcal{S}_2 = \{a + b : a \in \mathcal{S}_1, b \in \mathcal{S}_2\}\).

    \section{Problem Formulation}
    
        Consider an initial point \(x_S \in \mathbb{R}^n\) and a terminal point \(x_E \in \mathbb{R}^n\). Let \(f:[a,b] \rightarrow \mathbb{R}^n\) be a piecewise differentiable path such that \(f(a) = x_S\) and \(f(b) = x_E\). The length of the path \(f\) is defined as
        \begin{equation}
            L(f) = \int_a^b \|f'(\tau)\| \ \mathrm{d} \tau
        \end{equation}
        where \(f'(\tau)\) denotes the tangent vector of the path. Let \(\{\mathcal{P}_1, \mathcal{P}_2, \dots, \mathcal{P}_m\} \subset \mathbb{R}^n\) be a collection of pairwise disjoint polytopes, i.e., \(\mathcal{P}_i \cap \mathcal{P}_j = \emptyset\) for all \(i \neq j\). Polytopes are defined by the half-space representation in which there exist a matrix \(H \in \mathbb{R}^{k \times n}\) and a vector \(h \in \mathbb{R}^k\) (where \(k \geq n+1\)) such that
        \begin{equation}
                \mathcal{P} = \{x \in \mathbb{R}^n : H \, x \leq h\}
        \end{equation}
        and \(\mathcal{P}\) is bounded. By definition, polytopes are considered convex and closed, and they are referred to as replenishment regions throughout this paper. Let \(Q > 0\) denote the maximum allowable travel distance (budget).

        Let the system be modeled as a vehicle traveling on a path as it expends a resource amount equal to the length of the path traveled, thus the vehicle can travel at most a distance \(Q\) without replenishing. The vehicle's initial resource budget is set to its full capacity \(Q\), and upon entering or residing within a polytope \(\mathcal{P}_i\), the budget is ``reset'' and is fully replenished to \(Q\). Consequently, the vehicle's motion outside replenishment regions must satisfy the distance constraint, while any path segment within the regions is not constrained by the budget. The system dynamics is thus a hybrid dynamical system.

        We wish to find the shortest path between two locations, subject to the vehicle always maintaining a nonnegative resource level while traveling to a target destination. In this setting, we formulate the resource-constrained shortest path problem as follows.

        \textit{Problem 1.} The objective is to minimize \(L(f)\) over all feasible paths \(f:[0,T] \rightarrow \mathbb{R}^n\) from \(x_S\) to \(x_E\) in the presence of \(\{\mathcal{P}_1, \mathcal{P}_2, \dots, \mathcal{P}_m\}\), where the resource level \(r\) remains nonnegative, and the budget constraint \(Q\) is satisfied. Consequently, the optimization problem is described as
        \begin{subequations}
            \label{eq:MinimumPath}
            \begin{align}
                \min_{f(t)} \quad & L(f) \\
                \text{s.t.} \quad & r(t) \geq 0, \ \forall t \in [0,T]
            \end{align}
        \end{subequations}
        where \(f(0) = x_S\), \(f(T) = x_E\), \(r(0) = Q\), \(r'(t) = -1\) whenever \(f(t) \notin \bigcup_{i=1}^m \mathcal{P}_i\), and \(r(t) = Q\), \(r'(t) = 0\) whenever \(f(t) \in \bigcup_{i=1}^m \mathcal{P}_i\).
        
        As a result, the optimal solution depends on the selection of a feasible path \(f\) that minimizes \(L(f)\) and satisfies the distance budget \(Q\).     

    \section{Methodology}

        In this section, we first define key characteristics that describe the optimal path in Problem~1. We solve Problem~1 in two stages: a resource-constrained graph search identifies the ordered sequence of replenishment regions visited by the shortest feasible path, and convex programming refines the path within that sequence to optimality, reducing Problem~1 to a (GCS) problem.
        
        \subsection{Properties of the Optimal Path}

            Without resource constraints, the shortest path between two arbitrary points \(x_a\) and \(x_b\) is obviously a straight line, with length that equals \(\|x_a - x_b\|\). This can be proven easily using the triangle inequality. The subsequent result obviously follows.
            \begin{lem}
                For a given ordered set of points \(p_0,p_1,\dots,p_k\), defined in \(\mathbb{R}^n\) space, the minimum path \(f\) that visits the points in order consists of piecewise line segments and has length \(L^*(f) = \sum_{i=0}^{k-1} \|p_i - p_{i+1}\|\).
            \end{lem}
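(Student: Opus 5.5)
We will prove the statement by a lower bound valid for every admissible path together with a construction that attains it. Let \(f:[a,b]\to\mathbb{R}^n\) be any piecewise differentiable path visiting \(p_0,p_1,\dots,p_k\) in order. By definition there are parameter values \(a\le t_0\le t_1\le\dots\le t_k\le b\) with \(f(t_i)=p_i\). We first split the length integral at these times. Since the integrand \(\|f'(\tau)\|\) is nonnegative, dropping the pieces \([a,t_0]\) and \([t_k,b]\) gives
\[
L(f)\ \ge\ \sum_{i=0}^{k-1}\int_{t_i}^{t_{i+1}}\|f'(\tau)\|\,\mathrm{d}\tau .
\]

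Next we bound each piece from below by the chord length, which is the two-point fact quoted just before the lemma. On each subinterval \([t_i,t_{i+1}]\), \(f\) is piecewise differentiable, so the fundamental theorem of calculus applies on each differentiable piece and telescopes across the break points. This gives \(p_{i+1}-p_i=\int_{t_i}^{t_{i+1}} f'(\tau)\,\mathrm{d}\tau\). The norm of an integral is at most the integral of the norm; this is the integral form of the triangle inequality, and it can be obtained by pairing with the unit vector in the direction of \(p_{i+1}-p_i\) and applying Cauchy--Schwarz. Hence \(\int_{t_i}^{t_{i+1}}\|f'\|\ge\|p_{i+1}-p_i\|\). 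Summing over \(i\) yields \(L(f)\ge\sum_{i=0}^{k-1}\|p_i-p_{i+1}\|\).

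For attainment, we take the polygonal path \(f^\ast:[0,k]\to\mathbb{R}^n\) given by \(f^\ast(t)=p_i+(t-i)(p_{i+1}-p_i)\) for \(t\in[i,i+1]\). This path is piecewise differentiable and visits the points in order. On \([i,i+1]\) its derivative is constant and equal to \(p_{i+1}-p_i\), so its length equals the sum above exactly. Therefore the minimum exists, equals \(\sum_{i=0}^{k-1}\|p_i-p_{i+1}\|\), and is achieved by piecewise line segments.

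The claim that the minimizer \emph{consists of} line segments is a structural statement rather than an existence statement. To justify it, we use the equality case of the integral triangle inequality. Equality forces \(f'(\tau)\) to be a nonnegative multiple of \(p_{i+1}-p_i\) almost everywhere on each \([t_i,t_{i+1}]\), and equality in the first step forces zero length outside \([t_0,t_k]\). So every minimizer traces the segments \([p_i,p_{i+1}]\) monotonically, up to reparametrization. This equality analysis is the only step needing care, mainly the handling of break points and zero-speed intervals. Since the lemma concerns existence of a segment-wise minimizer together with the value \(L^\ast\), it may suffice to exhibit \(f^\ast\) and note this uniqueness remark.
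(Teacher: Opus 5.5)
Your proof is correct and takes the same route the paper intends. The paper only cites the two-point straight-line fact via the triangle inequality and says the lemma ``obviously follows'' by applying it to each consecutive pair. You make that rigorous by splitting at the visit times, bounding each piece below by its chord with the integral triangle inequality, and exhibiting the polygon that attains the sum; your equality-case argument for the ``consists of line segments'' claim is sound and goes beyond what the paper provides.
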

            
            Now suppose that parts of \(L(f)\) are limited by a length budget \(Q\), which is reset whenever \(f\) enters or resides within some \(\mathcal{P}_i\). Naturally, if \(\|x_S - x_E\| \leq Q\), the optimal path is a straight line from \(x_S\) to \(x_E\), and \(L^*(f) = \|x_S - x_E\|\). We now formally define the general case of the feasible resource-constrained path as follows.

            \begin{lem}
                Let \(f:[0,T] \rightarrow \mathbb{R}^n\) be a path with \(f(0) = x_S\), \(f(T) = x_E\), \(\{\mathcal{P}_1, \mathcal{P}_2, \dots, \mathcal{P}_m\}\), and \(Q > 0\) given. The path \(f\) is feasible if there exist \( p_0 < p_1 < \dots < p_k\) such that \(p_0 = 0\), \(p_k = T\), \(f(p_j) \in \bigcup_{i = 1}^m \partial\mathcal{P}_i, \forall j=1,2,\dots,k-1\), and, for every \(s = 1,2,\dots,k\), the path segment \(f_s \subset f\) between \(f(p_{s-1})\) and \(f(p_{s})\) satisfies \(L(f_{s}) \leq Q\) whenever this segment does not lie entirely within some replenishment region, i.e., whenever \(f_s(\rho) \notin \bigcup_{i=1}^m \mathcal{P}_i\) for any \(\rho \in (p_{s-1},p_{s})\).
            \end{lem}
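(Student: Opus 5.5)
The plan is to show that the decomposition in the lemma is a sufficient certificate of feasibility in the sense of Problem~1. Given \(p_0<\dots<p_k\) as in the statement, we reconstruct the resource trajectory \(r(t)\) from the hybrid dynamics and verify \(r(t)\ge 0\) on \([0,T]\). We do this segment by segment. At each breakpoint \(p_j\), \(1\le j\le k-1\), we have \(f(p_j)\in\partial\mathcal{P}_i\subseteq\mathcal{P}_i\), because polytopes are closed. By the reset rule this gives \(r(p_j)=Q\). Together with \(r(p_0)=r(0)=Q\), every segment \([p_{s-1},p_s]\) therefore starts with a full budget.

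Fix a segment \(f_s\) on \([p_{s-1},p_s]\) and split into the two cases the lemma distinguishes.
\begin{itemize}
\item \emph{The segment lies entirely within some replenishment region.} Then \(r(t)=Q\ge 0\) throughout, by the reset rule.
\item \emph{Otherwise.} The lemma supplies \(L(f_s)\le Q\). For \(t\in[p_{s-1},p_s]\), the resource decreases only while \(f\notin\bigcup_i\mathcal{P}_i\), and it is reset to \(Q\) whenever the path is inside a region. Hence \(Q-r(t)\) is at most the length of \(f\) traveled outside the regions since the last reset.
\end{itemize}
To bound that length, we use the standard identification of elapsed time with arc length. We reparametrize \(f\) by arc length, which is legitimate since \(f\) is piecewise differentiable and \(L\) is reparametrization invariant. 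In this parametrization \(\|f'\|=1\), so the decrease rate \(r'=-1\) means "budget consumed = distance traveled outside the polytopes". This gives
\[
Q-r(t)\le \int_{p_{s-1}}^{t}\|f'(\tau)\|\,\mathrm{d}\tau\le L(f_s)\le Q,
\]
so \(r(t)\ge 0\) on the segment.

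Concatenating the segments covers \([0,T]\), since \(p_0=0\) and \(p_k=T\). Hence \(r(t)\ge0\) for all \(t\), and \(f\) satisfies the constraint of Problem~1 with the correct endpoints.

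The main obstacle is reconciling the time parametrization in Problem~1, where \(r'=-1\) per unit time, with the length budget in the lemma. I would resolve this by fixing the unit-speed (arc-length) parametrization once at the start and noting that the statement is invariant under it.

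A secondary subtlety is that the piecewise definition of \(r\) is ill-posed on the boundary. Since \(\partial\mathcal{P}_i\subseteq\mathcal{P}_i\), the boundary points count as "inside", so the resets at the \(p_j\) are consistent with the dynamics.

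Optionally, one can also address the converse direction, which would show the characterization is exact. Given a feasible \(f\), take as breakpoints the first and last times of each maximal excursion into a polytope. These are boundary points, and there are finitely many if \(f\) is piecewise differentiable with finitely many crossings, which I would assume. Each excursion outside the polytopes starts with \(r=Q\) and ends with \(r\ge0\), so its length is at most \(Q\). This shows the segment condition is also necessary, up to that finiteness assumption.
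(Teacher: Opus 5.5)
Your argument is correct and uses the same mechanism as the paper's one-line proof: the budget is full at each \(p_j\) because \(\partial\mathcal{P}_i\subseteq\mathcal{P}_i\), and it then drains at unit rate per unit length, so \(L(f_s)\le Q\) keeps \(r\ge 0\). It is more careful than the paper's sentence, which reads as the necessity direction ("must satisfy \(L(f_s)\le Q\) for \(f\) to remain feasible"); you verify the sufficiency the lemma actually asserts and treat the converse separately.

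One correction: "the statement is invariant under it" is false for the conclusion. Under Problem~1's literal rule \(r'=-1\) per unit time, feasibility is not reparametrization-invariant: traversing a length-\(Q\) excursion at speed \(1/2\) drives \(r\) down to \(-Q\). So the unit-speed parametrization (equivalently \(r'=-\|f'\|\), matching the paper's description of a resource expended equal to the length traveled) must be adopted as part of the model, not assumed without loss of generality. The paper's "unit rate" makes the same assumption tacitly.
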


            \begin{pf}
                Since the budget resets to \(Q\) only while \(f(\rho)\) remains inside some \(\mathcal{P}_i\), any segment \(f_s\) that leaves every \(\mathcal{P}_i\) depletes the budget at unit rate and must satisfy \(L(f_s)\le Q\) for \(f\) to remain feasible. \hfill \(\blacksquare\)
            \end{pf}

            A simple proof sketch is shown in Fig.~\ref{fig:GeneralConstrainedPath} for the case of \(k=3\). In this example, the total length of path segments \(f_1\), connecting the points \(f(p_0)\) and \(f(p_1)\), and \(f_3\), connecting the points \(f(p_2)\) and \(f(p_3)\), must be less than or equal to \(Q\) for feasibility since they are outside \(\mathcal{P}\). Once the path enters \(\mathcal{P}\) through \(f(p_1)\), the budget constraint \(Q\) resets and remains so as long the path exists within \(\mathcal{P}\). Consequently, \(L(f_2)\) is not limited by \(Q\) and \(f_2\) can move anywhere in the replenishment region.
            
            \begin{figure}[hb]
                \begin{center}
                    \includegraphics[width=8.4cm]{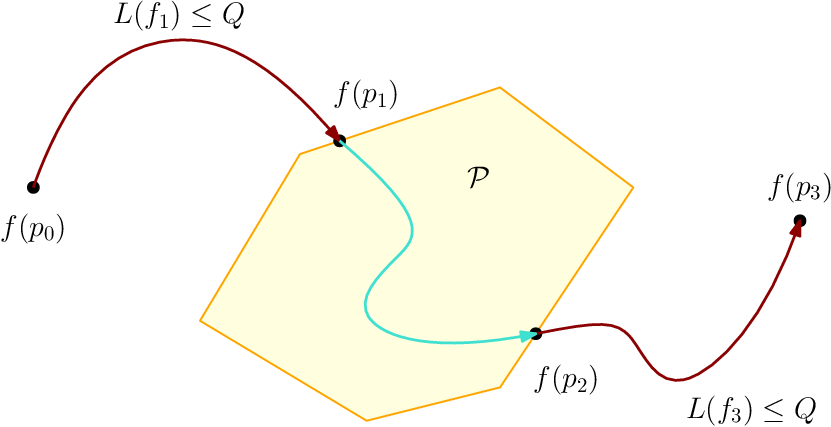}    
                    \caption{An example of a feasible resource-constrained path in the presence of replenishment regions as described by Lemma~2.} 
                    \label{fig:GeneralConstrainedPath}
                \end{center}
            \end{figure}

            Since \(L(f)\) is limited by \(Q\), the farthest extent the path \(f\) can reach without replenishing its budget can be bounded by a ball with a radius \(Q\). A ball is defined as 
            \begin{equation}
                \mathcal{B}(c;r) := \{ p \in \mathbb{R}^n : \|p - c\| \leq r \}
            \end{equation}
            where \(c\) and \(r\) are the center of the ball and the radius, respectively. Consequently, \(\mathcal{B}(c;r)\) contains all budget-constrained paths \(f_i\) that start at \(c\) and \(0 \leq L(f_i) \leq Q\). By Lemma~2, the following is inferred.

            \begin{cor}
                Consider the sets \(\mathcal{B}_S(x_S;r_S)\), \(\mathcal{B}_E(x_E;Q)\), and \(\mathcal{B}_0(0;Q)\), with \(0 < r_S \leq Q\). Let \(\mathcal{D}_{i,j} = \mathcal{P}_i \cap (\mathcal{P}_j \oplus \mathcal{B}_0)\), \(\mathcal{D}_{i,0} = \mathcal{B}_S \cap \mathcal{P}_i\), \(\mathcal{D}_{i,m+1} = \mathcal{B}_E \cap \mathcal{P}_i\), \(\forall i,j=1,2,\dots,m\), where \(i \neq j\), \(\mathcal{E}_{i,k} = \partial\mathcal{P}_i \cap \partial\mathcal{D}_{i,k}\), \(\forall k=0,1,\dots,m+1\), where \(i \neq k\), be subspaces in \(\mathbb{R}^n\). The resource-constrained path \(f\) from \(x_S\) to \(x_E\) is feasible only if \(\bigcup_{i=1}^m \mathcal{D}_{i,0} \neq \emptyset\), \(\bigcup_{i=1}^m \mathcal{D}_{i,m+1} \neq \emptyset\), and there exist nonempty sets
                \begin{equation}
                    \mathcal{A} = \{\mathcal{P}_i : \exists \mathcal{D}_{i,j} \subset \mathcal{P}_i\}
                \end{equation}
                for all \(i = 1,2,\dots,m\) and \(j = 0,1,\dots,m+1\), where \(j \neq i\), and
                \begin{align}
                    \mathcal{M}_S &= \bigcup_{\substack{1 \leq i \leq m}} \ \bigcup_{t \in [0,1]} \big[ t \, x_S \oplus (1-t) \, \mathcal{E}_{i,0} \big]
                    \\
                    \mathcal{M}_E &= \bigcup_{\substack{1 \leq i \leq m}} \ \bigcup_{t \in [0,1]} \big[ t \, x_E \oplus (1-t) \, \mathcal{E}_{i,m+1} \big]
                    \\
                    \mathcal{M}_P &= \bigcup_{\substack{1 \leq i,j \leq m \\ i \neq j, \ (i) < (j)}} \ \bigcup_{t \in [0,1]} \big[ t \, \mathcal{E}_{i,j} \oplus (1-t) \, \mathcal{E}_{j,i} \big]
                \end{align}
                where \((i)<(j)\) denotes lexicographic order, used only to avoid duplicating each unordered pair, such that the subspace \(\mathcal{R} = \mathcal{A} \cup \mathcal{M}_P \cup \mathcal{M}_S \cup \mathcal{M}_E\), defined in \(\mathbb{R}^n\), is a path-connected space and \(f \subset \mathcal{R}\).
            \end{cor}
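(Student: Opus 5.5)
We treat the corollary as a necessary condition and argue from Lemma~2. The key observation is that any budget-constrained segment of length at most \(Q\) stays in a ball of radius \(Q\) about its starting point. We will also need the trivial case \(\|x_S - x_E\| \leq Q\) excluded (otherwise no polytope is needed), so we assume \(\|x_S-x_E\|>Q\) throughout.

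Take a feasible \(f\) with breakpoints \(p_0<\dots<p_k\) as in Lemma~2. Because \(\|x_S-x_E\|>Q\), we have \(k\geq 2\), and every interior breakpoint \(f(p_j)\) lies on some \(\partial\mathcal{P}_i\).

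\textbf{First segment.} The first segment \(f_1\) leaves \(\mathcal{P}\)-free space from \(x_S\) and satisfies \(L(f_1)\leq Q\). Hence \(\|f(p_1)-x_S\|\leq L(f_1)\leq Q\). Since \(r(0)=Q\), we may take \(r_S=Q\), so \(f(p_1)\in\mathcal{B}_S\cap\partial\mathcal{P}_i\subset\mathcal{D}_{i,0}\). Moreover, \(f(p_1)\) lies on \(\partial\mathcal{P}_i\) and on the relevant boundary piece, giving \(f(p_1)\in\mathcal{E}_{i,0}\). Thus \(\bigcup_i\mathcal{D}_{i,0}\neq\emptyset\).

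\textbf{Last segment.} The same argument applied to \(f_k\) and \(\mathcal{B}_E\) gives \(f(p_{k-1})\in\mathcal{E}_{i',m+1}\).

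\textbf{Intermediate segments.} Any intermediate segment either lies inside a single \(\mathcal{P}_i\), or is a budget segment joining \(y\in\partial\mathcal{P}_i\) to \(z\in\partial\mathcal{P}_j\) with \(\|y-z\|\leq Q\). We may take \(i\neq j\), since a return to the same polytope is replaced by a path inside it by convexity. Then \(y\in\mathcal{P}_i\cap(\mathcal{P}_j\oplus\mathcal{B}_0)=\mathcal{D}_{i,j}\), and symmetrically \(z\in\mathcal{D}_{j,i}\). So every polytope visited belongs to \(\mathcal{A}\), which is therefore nonempty.

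\textbf{Replacing \(f\) by a polygonal path.} Next we invoke Lemma~1. Replacing each budget segment by the straight chord between its endpoints does not increase length and preserves feasibility. Replacing each in-polytope segment by the chord inside the convex \(\mathcal{P}_i\) does likewise. The chord from \(x_S\) to \(f(p_1)\) is \(\{t x_S+(1-t)f(p_1)\}\subset\mathcal{M}_S\). The chord from \(y\) to \(z\) lies in \(t\,\mathcal{E}_{i,j}\oplus(1-t)\mathcal{E}_{j,i}\subset\mathcal{M}_P\). The final chord lies in \(\mathcal{M}_E\), and in-polytope pieces lie in \(\mathcal{A}\). The resulting path is therefore a continuous curve from \(x_S\) to \(x_E\) lying in \(\mathcal{R}\).

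\textbf{Path-connectedness of \(\mathcal{R}\).} The pieces of \(\mathcal{R}\) are linked along the chain \(x_S\to\mathcal{P}_{i_1}\to\dots\to\mathcal{P}_{i_\ell}\to x_E\). Each \(\mathcal{P}_i\) is convex and hence path-connected, and each segment family is a union of segments attached to these polytopes. So the component of \(\mathcal{R}\) containing \(x_S\) also contains \(x_E\). This is the sense in which \(\mathcal{R}\) is path-connected and contains the optimal, straightened \(f\). The stated \(f\subset\mathcal{R}\) is to be understood for this polygonal representative, which is the only one relevant for optimality.

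\textbf{Main obstacle.} The hardest step is showing that the breakpoints land in \(\mathcal{E}_{i,j}=\partial\mathcal{P}_i\cap\partial\mathcal{D}_{i,j}\) rather than merely in \(\mathcal{D}_{i,j}\cap\partial\mathcal{P}_i\). The corollary's definitions blur the distinction between these two sets. We would first try to settle this by noting that exit and entry points lie on \(\partial\mathcal{P}_i\). A point of \(\partial\mathcal{P}_i\cap\mathcal{D}_{i,j}\) is then a boundary point of \(\mathcal{D}_{i,j}\), because \(\mathcal{D}_{i,j}\subset\mathcal{P}_i\) and any neighbourhood of it meets the complement of \(\mathcal{P}_i\). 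This yields \(\partial\mathcal{P}_i\cap\mathcal{D}_{i,j}\subset\mathcal{E}_{i,j}\).

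A second, milder issue is that \(f\) itself may wander inside polytopes outside \(\mathcal{A}\)'s listed members. In that case we restrict attention to the straightened path, which suffices for the feasibility claim used later.
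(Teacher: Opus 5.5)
Your argument is sound for the version of the statement that can actually be true, and it runs in a different direction from the paper's proof. The paper's proof makes three moves. It notes that \(\bigcup_i\mathcal{D}_{i,0}\neq\emptyset\) and \(\bigcup_i\mathcal{D}_{i,m+1}\neq\emptyset\) are needed to leave \(x_S\) and to reach \(x_E\). It observes that \(\mathcal{D}_{i,j}\neq\emptyset\) if and only if \(\mathcal{D}_{j,i}\neq\emptyset\), so the interstitial sets carry budget-feasible hops. It then concludes that if the polytopes and interstitial sets form a connected set containing \(x_S\) and \(x_E\), a feasible path exists. That last step is the converse (sufficiency) direction; the paper never shows that a given feasible \(f\) lies in \(\mathcal{R}\).

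You prove the ``only if'' direction directly:
\begin{itemize}
\item You decompose a feasible \(f\) via Lemma~2, used in its necessity direction. That is what its proof sketch actually supports, and taking breakpoints at the entry and exit times of the polytopes makes it explicit.
\item You place the breakpoints in the \(\mathcal{E}\)-sets through the identity \(\mathcal{E}_{i,j}=\partial\mathcal{P}_i\cap\mathcal{D}_{i,j}\). Your inclusion is correct, and the reverse one is trivial because \(\mathcal{D}_{i,j}\) is closed.
\item You straighten with Lemma~1 and read off that the straightened path lies in \(\mathcal{R}\).
\end{itemize}
The paper's sketch mainly motivates which hops the graph of Section~3.2 should contain. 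Yours actually establishes the necessity claim and shows exactly where the statement must be weakened.

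You should state those weakenings as corrections rather than as ``the sense in which'' the claim holds, because the literal claims are false.
\begin{itemize}
\item \(\mathcal{R}\) need not be path-connected. Two polytopes within \(Q\) of each other but farther than \(Q\) from everything else both belong to \(\mathcal{A}\), and together with their \(\mathcal{M}_P\) segments they form a separate component. Only ``\(x_S\) and \(x_E\) lie in one path component of \(\mathcal{R}\)'' is provable, or a version with \(\mathcal{A}\) and \(\mathcal{M}_P\) restricted to the visited chain.
\item You silently skip \(\mathcal{M}_P\neq\emptyset\), and it is false. Take \(m=1\) with a feasible route \(x_S\to\mathcal{P}_1\to x_E\) and \(\|x_S-x_E\|>Q\); then \(\mathcal{M}_P\) is an empty union.
\item Setting \(r_S=Q\) is forced, not a free choice. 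When \(r_S<Q\), a configuration whose nearest polytope lies at distance in \((r_S,Q]\) from \(x_S\) admits feasible paths but has \(\bigcup_i\mathcal{D}_{i,0}=\emptyset\).
\item Your ``second, milder issue'' is misdiagnosed. Your own intermediate-segment paragraph already shows that every polytope visited by a feasible \(f\) belongs to \(\mathcal{A}\). The only reason the original \(f\) can leave \(\mathcal{R}\) is a curved excursion of length at most \(Q\) that bulges outside the chords making up \(\mathcal{M}_S\), \(\mathcal{M}_P\) and \(\mathcal{M}_E\). That is precisely why the straightened representative is needed.
\end{itemize}
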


            \begin{pf}
                Both \(\bigcup_{i=1}^m \mathcal{D}_{i,0} \neq \emptyset\) and \(\bigcup_{i=1}^m \mathcal{D}_{i,m+1} \neq \emptyset\) are necessary to construct a budget-constrained path from \(x_S\) to some \(\mathcal{P}_i\) and from some \(\mathcal{P}_i\) to \(x_E\), respectively. In addition, if \(\mathcal{D}_{i,j}\) exists for some \(i\) and \(j\), where \(i \neq j\), then \(\mathcal{D}_{j,i}\) also exists by definition, and an inter-regional space between \(\mathcal{P}_i\) and \(\mathcal{P}_j\) can be formed. These interstitial spaces, expressed as \(t \,\mathcal{E}_{i,j} \oplus (1-t) \, \mathcal{E}_{j,i}\) with \(t \in [0,1]\), contain feasible paths that meet the budget constraint outside the replenishment regions. If enough such \(\mathcal{D}_{i,j}\) are nonempty that the union of the corresponding \(\mathcal{P}_i\) and the interstitial spaces forms a connected space containing \(x_S\) and \(x_E\), then a feasible path \(f\) exists within this set satisfying the budget constraint. \hfill \(\blacksquare\)
            \end{pf}

            A proof sketch is shown in Fig.~\ref{fig:FeasiblePath}. Here, the spaces \(\mathcal{D}_{1,0}\), \(\mathcal{D}_{1,2}\), \(\mathcal{D}_{2,1}\), and \(\mathcal{D}_{2,3}\) indicate the regions where a feasible path can move from \(x_S\), \(x_E\), or some \(\mathcal{P}_i\) to a replenishment region without violating the budget \(Q\). The space \(\mathcal{R}\) contains all replenishment regions with existent \(\mathcal{D}_{i,j}\) and the interstitial spaces outside them, within which it is possible to traverse while remaining within the length limit. Thus, the continuous path \(f\) is feasible only if it is entirely in \(\mathcal{R}\) where the budget can be reset without violating the length constraint \(Q\).

            \begin{figure}[hb]
                \begin{center}
                    \includegraphics[width=8.4cm]{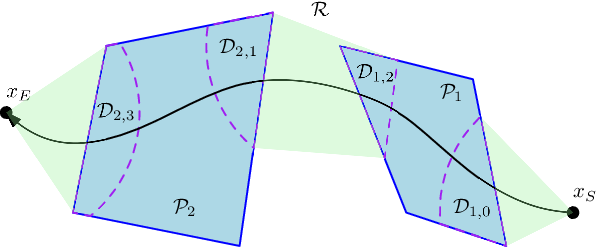}    
                    \caption{A proof sketch of corollary 3 where the feasible path \(f\) exists in the path-connected space \(\mathcal{R}\) which comprises the inter-regional spaces \(\mathcal{M}_P\), \(\mathcal{M}_S\), \(\mathcal{M}_E\)  (light green), and the polytopes in \(\mathcal{A}\) (light blue).} 
                    \label{fig:FeasiblePath}
                \end{center}
            \end{figure}

            From Lemma~2 and Corollary~3, a feasible path can be decomposed as
            \begin{equation*}
                x_S \rightarrow a_{\sigma_1} \rightarrow b_{\sigma_1} \rightarrow a_{\sigma_2} \rightarrow \dots \rightarrow b_{\sigma_k} \rightarrow x_E
            \end{equation*}
            such that \(a_i, b_i \in \mathcal{P}_i\) are entry and exit points, respectively, for some ordered sequence of visited polytopes indexed by \(\sigma = (\sigma_1, \sigma_2, \dots, \sigma_k)\) where \(\sigma_j \in \{1,2,\dots,m\}\). The optimization problem in~(\ref{eq:MinimumPath}) can be reformulated to
            \begin{subequations}
                \label{eq:MinimumPath2}
                \begin{align}
                    \min_{\substack{a_{\sigma_i}, b_{\sigma_i} \in \mathcal{P}_{\sigma_i} \\ 1 \leq i \leq k}} \quad & L(f) \\
                    \text{s.t.} \quad & l_{2i+1} \leq Q, \quad i = 0,1,\dots,k
                \end{align}
            \end{subequations}
            where the length of the decomposed path is represented as \(L(f) = l_1 + l_2 + \dots + l_{2k+1}\) such that \(l_1\) relates to the path \(x_S \rightarrow a_{\sigma_1}\), \(l_{2i}\) relates to the path \(a_{\sigma_i} \rightarrow b_{\sigma_i}\) for all \(i\), \(l_{2i+1}\) relates to the path \(b_{\sigma_i} \rightarrow a_{\sigma_{i+1}}\) for all \(i\), and \(l_{2k+1}\) relates to the path \(b_{\sigma_k} \rightarrow x_E\). For a feasible path, each inter-region segment satisfies \(l_{2i+1} \leq Q,\) for all \(i=0,1,\dots,k\). As a result, the optimal solution depends on the selection of the points \(a_i\) and \(b_i\) that both minimize \(L(f)\) and uphold the transition constraint \(Q\).

            The shortest resource-constrained path has key characteristics: (i) the path consists entirely of piecewise straight-line segments (Lemma~1), (ii) the endpoints of the line segments lie in some \(\mathcal{P}\) (Lemma~2), and (iii) the path enters a sequence of replenishment regions in a feasible-connected space (Corollary~3). The properties of the optimal path make Problem~1 equivalent to the shortest path in a GCS under a distance constraint. Taking this into account, the continuous optimization problem can be reduced to a finite graph search problem. The problem is now equivalent to finding the shortest path in a graph whose nodes and edges are the points on the boundary of the convex regions and the distances between these points, respectively.

        \subsection{Resource-Constrained Graph with Budget Reset}

            We reformulate the continuous problem of the resource-constrained shortest path into a discrete structure where the path is described by nodes and edges. Consider a graph \(\mathcal{G} := (V,E)\) where \(V\) and \(E\) represent the set of nodes and the edges between them, respectively. Given that the optimal path consists of piecewise line segments, \(V\) consists of points on each \(\partial \mathcal{P}\) that the path \(f\) may pass through, together with \(x_S\) and \(x_E\), which serve as the source and terminal nodes, respectively.
            
            Here, nodes are points that can potentially be \(a_{\sigma_i}\) or \(b_{\sigma_i}\). Since \(\partial \mathcal{P}_i\) is a continuous set, we select a discrete candidate set of nodes that will yield feasible and near-optimal solutions. From Corollary 3, the goal is to obtain candidate points that lie at \(\mathcal{E}_{i,j}\), narrowing the candidate set. To determine these nodes, we apply a wavefront point generation strategy. In this method, we generate a sequence of spheres centered at given points called seeds with radii (levels) that gradually vary at a uniform step size equal to \(\delta\) up to \(Q\). The levels are determined by \(\Delta_k = k \, Q/\delta, \forall k=1,2,\dots,\delta\). At each circular level, we calculate the intersection points of these spheres with each \(\partial \mathcal{P}_i\), which become candidate nodes. Seed points are selected such that the propagated spheres intersect with any \(\mathcal{E}\). The points \(x_S\) and \(x_E\) are trivial seed points, and the vertices of each \(\mathcal{P}_i\) are included as seed points because, by convexity of \(\mathcal{P}_i\), whenever \(\mathcal{D}_{i,j} \neq \emptyset\) some vertex \(\nu\) satisfies \(\mathcal{B}(\nu,Q) \cap \mathcal{E}_{i,j} \neq \emptyset\). An illustration of the wavefront propagation point generator is shown in Fig.~\ref{fig:WavePorpagation}, and Table~\ref{tb:wavepropagation} shows a pseudocode for the wavefront method.

            \begin{figure}
                \begin{center}
                    \includegraphics[width=8.4cm]{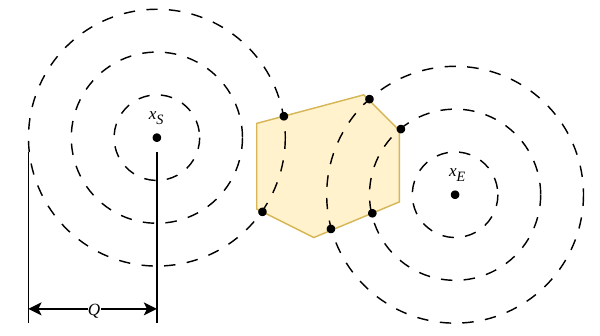}    
                    \caption{An example of the wavefront point generator (with \(\delta=3\)), where points are determined by the intersections of propagated circles up to radius \(Q\) centered at initial (seed) points and the region \(\mathcal{P}\).} 
                    \label{fig:WavePorpagation}
                \end{center}
            \end{figure}

            \begin{table}[hb]
                \begin{center}
                    \caption{Pseudocode of the wavefront method to generate candidate points}
                    \label{tb:wavepropagation}
                    \begin{tabular}{rl}
                        \hline
                        \multicolumn{2}{l}{\textbf{Algorithm 1: Wavefront candidate point generator}} \\
                        \hline
                        1: & \textbf{Inputs} seed points (\(x_S\), \(x_E\), vertices of \(\{\mathcal{P}_1,\dots,\mathcal{P}_m\}\)), \(Q\) \\
                        2: & \textbf{Set} nodes = seed points, step size = \(\delta\) \\
                        3: & \textbf{for} each level \(\Delta_i\) from \(\Delta_1\) to \(\Delta_\delta\) \\
                        4: & \(\quad\) \textbf{for} each seed point \(p_j\) \\
                        5: & \(\quad\) \(\quad\) \textbf{for} each polytope \(\mathcal{P}_i\) in \(\{\mathcal{P}_1,\dots,\mathcal{P}_m\}\)) and \(p_j \notin \mathcal{P}_i\) \\
                        6: & \(\quad\) \(\quad\) \(\quad\) Calculate circle-polygon intersections \\
                        7: & \(\quad\) \(\quad\) \(\quad\) Add intersection points to node set \\
                        8: & \(\quad\) \(\quad\) \textbf{end} \\
                        9: & \(\quad\) \textbf{end} \\
                        10: & \textbf{end} \\
                        11: & \textbf{Return} nodes \\
                        \hline
                    \end{tabular}
                \end{center}
            \end{table}
            
            An edge \(E\) is a nonnegative weight that represents the distance between two nodes. As such, we define the weights of all edges as \(w(u,v) = \|u-v\|\) for any nodes \(u,v \in V\). An edge \(E\) exists if \(\|u - v\| \leq Q\), otherwise the value of the edge is set to \(w(u,v) = \infty\). However, if \(u, v \in \mathcal{P}_i\), i.e., cross polygon interior segments, the edge has unconstrained value. The shortest resource-constrained graph problem for \(\mathcal{G}\) is formulated as
            \begin{equation}
                \label{eq:ShortestGraph}
                \min_{\mathcal{G}} \space \sum_{u,v \in V} w(u,v)
            \end{equation}
            To solve (\ref{eq:ShortestGraph}), we utilize Dijkstra’s algorithm \citep{Candra:2020}, and the algorithm for obtaining the shortest graph is detailed as follows:
            \begin{description}
                \item[\textit{Step 1}] (Candidate Point Generation) Select candidate points \(x_S\), \(x_E\), vertices of \(\mathcal{P}_i\), and the points generated from the wavefront propagation strategy that serves as the nodes of \(\mathcal{G}\).
                \item[\textit{Step 2}] (Graph Construction) Compute the weights \(w\) by calculating all pairwise distances between the nodes while performing feasibility checks.
                \item[\textit{Step 3}] (Shortest Path) Run Dijkstra's algorithm from the source \(x_S\) to the terminal \(x_E\) using the obtained graph.
                \item[\textit{Step 4}] (Path Reconstruction) Recover the sequence of nodes with the minimum values and construct a geometric path. This reconstructed path is the solution to the shortest graph.
            \end{description}

            The step size \(\delta\) controls the granularity of the candidate node set \(V\) and, in turn, the near-optimality of the path length \(L(f)\) and the recovered sequence \(\sigma\). Assuming the existence of \(\mathcal{R}\), as identified in Corollary~3, the algorithm returns a feasible path on \(\mathcal{G}\) and a sequence \(\sigma\) using \(x_S\), \(x_E\) and the vertices of the polytopes as seed points. However, the discretized path length \(L(\mathcal{G})\) can exceed the true optimum \(L^*(f)\) as the nodes might not be placed near the optimal entry/exit points. By increasing \(\delta\), the candidate node set \(V\) covers more points in \(\mathcal{E}_{i,j}\), but this increases the computational complexity of the graph algorithm.

            While the discrete graph search identifies the optimal sequence of regions \(\sigma\), the resulting path is limited by the resolution of the wavefront points. To alleviate this, we apply convex programming subsequently as a continuous-refinement stage to treat the entry/exit points as continuous decision variables, effectively polishing the path to its true global optimum within that sequence.

        \subsection{Solution Refinement with Convex Optimization}

            Since the shortest path that passes a sequence of points consists of piecewise line segments, we can describe the path as
            \begin{multline}
                \label{eq:ShortestLineSegments}
                L(f) = \|x_S - a_{\sigma_1} \| + \|x_E - b_{\sigma_k}\| \\+ \sum_{i=1}^k \|a_{\sigma_i} - b_{\sigma_i}\| + \sum_{i=1}^{k-1} \|b_{\sigma_i} - a_{\sigma_{i+1}}\|
            \end{multline}
            which is a convex function with a global minimum existing. Consequently, the shortest path problem in~(\ref{eq:MinimumPath2}) can be rewritten as
            \begin{subequations}
                \label{eq:ConvexProgramming}
                \begin{align}
                    \min_{\substack{a_{\sigma_i}, b_{\sigma_i} \in \mathcal{P}_{\sigma_i} \\ 1 \leq i \leq k}} \quad & (\ref{eq:ShortestLineSegments}) \\
                    \quad \mathrm{st.} \quad &\|x_S - a_{\sigma_1}\| \leq Q \\
                    \qquad \quad &\|b_{\sigma_i} - a_{\sigma_{i+1}}\| \leq Q, \quad \forall i = 1,\dots,k-1 \\
                    \qquad \quad &\|b_{\sigma_k} - x_E\| \leq Q \\
                    \qquad \quad &H_{\sigma_i} a_{\sigma_i} \leq h_{\sigma_i}, \quad \forall i \in \{1, \dots, k\} \\
                    \qquad \quad &H_{\sigma_i} b_{\sigma_i} \leq h_{\sigma_i}, \quad \forall i \in \{1, \dots, k\}
            \end{align}
            \end{subequations}
            
            Equations~(\ref{eq:ConvexProgramming}b) and (\ref{eq:ConvexProgramming}c) are inequalities to enforce the feasibility of the path to satisfy the length budget \(Q\) outside \(\mathcal{P}_i\), whereas equations (\ref{eq:ConvexProgramming}d) and (\ref{eq:ConvexProgramming}e) are inequalities that ensure the entry and exit points \(a_i\) and \(b_i\) are within \(\mathcal{P}_i\). Since the objective function in~(\ref{eq:ConvexProgramming}) is convex and the constraints are convex and affine, the optimization problem can be solved with convex programming.

            Although~(\ref{eq:ConvexProgramming}) can be solved with convex programming, it scales poorly if the number of replenishment regions increases as this requires solving the convex optimization for all possible sequences \(\sigma\) the path may take; thus \(k\) is a decision variable which increases the computational complexity in return. To avoid this, the resource-constrained graph solution (as shown in Section 3.2) finds the optimal path sequence \(\sigma^*\) the path will make to pass the required replenishment regions to maintain the budget constraint. The convex optimization problem~(\ref{eq:ConvexProgramming}) is solved once only for the given sequence \(\sigma^*\) obtained by Dijkstra's method. As a result, the resource-constrained shortest path problem is solved in two steps: the shortest graph with Dijkstra method provides the topology (sequence), while the convex program provides the exact coordinates.

    \section{Illustrative Example}

        In this section, we show an illustrative example of the proposed method. The numerical results are carried out using MATLAB environment. The convex optimization problem is solved using CVX package for MATLAB \citep{cvx}. We will show a 2-dimensional example to ease illustration, but the approach is also applicable for \(n\)-dimensional cases.
        
        Consider a simple model of a vehicle, equipped with a solar panel, that has to traverse an open area. The environment is assumed to be flat, and there exist no obstacles that would block the system's movements. The vehicle starts at the origin point \(x_S = [0 \quad 0]^\top\) and the objective is to reach the target destination located at \(x_E = [18 \quad 14]^\top\) in the shortest path possible. The vehicle's fuel tank has a range of 3 and consequently, it needs to stop at some of replenishment regions \(\mathcal{P}_i\), indexed by \(i=1,2,\dots,15\), where the vehicle can harvest solar energy for refueling and regain its budget. Refuel regions are randomly generated over the environment as depicted by the yellow polytopes in Fig.~\ref{fig:illustration}. The budget is assumed to be reset instantly during recharge. The main focus of the results is to construct the shortest resource constrained path, thus the time to complete the task is open, and, for the purpose of this example, we assume that the vehicle follows the geometric path, which corresponds to the trajectory of a first-order integrator system.

        \begin{figure}[hb]
            \begin{center}
                \includegraphics[width=8.4cm]{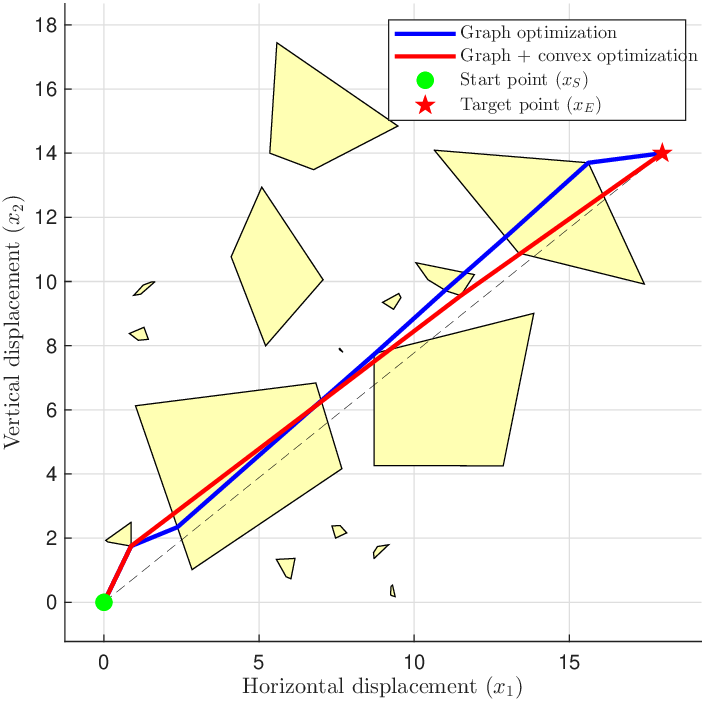}    
                \caption{Illustration of the resource-constrained shortest path in the example.} 
                \label{fig:illustration}
            \end{center}
        \end{figure}

        The result of applying the optimization algorithm is shown in Fig.~\ref{fig:illustration}. The dashed black line is the absolute shortest path from \(x_S\) to \(x_E\) which is equal to \(22.8\) in this case. The algorithm first builds \(\mathcal{G}\) to construct the shortest resource-constrained graph. Here, the nodes are \(x_S\), \(x_E\), the vertices of all \(\mathcal{P}_i\) and the circle-polygon intersections from the wavefront algorithm where the step size of the levels is selected as \(\delta = 4\). Using Dijkstra's method, the algorithm determines the initial calculations of nodes that potentially represent the points \(a_{\sigma_i}\) and \(b_{\sigma_i}\) that decide the shortest feasible polytope sequence \(\sigma\) the path will visit. The blue line in Fig.~\ref{fig:illustration} describes the path constructed solely using the graph method, without further convex programming, where the path only passes the selected candidate nodes and the total length of the path is around \(23.4185\).
        
        Using the sequence \(\sigma\) obtained from solving the shortest path in \(\mathcal{G}\), we now run convex programming on \(\sigma\) to calculate the actual values \(a_{\sigma_i}\) and \(b_{\sigma_i}\) instead of the limited discrete set, improving the result of the shortest path. The red line in Fig.~\ref{fig:illustration} is the shortest path obtained from convex programming on \(\sigma\), and the trajectory of the path is different from the graph method. The total length of the path from the solution of the convex program is approximately \(23.0160\), which is less than the result of the graph method. Note that the result of the graph method can be improved by increasing the candidate space using more levels in the wavefront method, and the results become closer to the convex programming solution. However, the addition of new vertices significantly increases the computational complexity of the algorithm in the process.

    \section{Conclusion}

        This paper considers a framework for solving the continuous resource-constrained shortest path problem where a budget is replenished within convex polytopes. By characterizing the geometric structure of the optimal path, specifically its piecewise linear nature and feasible-connected space property, we reduced a continuous problem into a manageable two-stage graph optimization. We demonstrated that a wavefront-based discretization combined with Dijkstra’s algorithm effectively identifies the optimal sequence of regions, while subsequent convex programming provides the exact optimal entry and exit points. The proposed approach balances computational efficiency with theoretical optimality, outperforming purely discrete graph-based methods. 

        Future work will focus on extending these findings to the optimal control of dynamical systems with internal resource states. This includes designing trajectories that are compatible with higher-order system dynamics and investigating the impact of time-varying resource consumption in the presence of replenishment regions. Additionally, we aim to extend the framework to handle non-convex replenishment regions and dynamic environments with moving obstacles.


    \section*{ACKNOWLEDGMENT}

        The authors would like to thank Ram Padmanabhan for their valuable feedback and editorial assistance throughout the preparation of this manuscript.

    \section*{DECLARATION OF GENERATIVE AI AND AI-ASSISTED TECHNOLOGIES IN THE WRITING PROCESS}
    
        During the preparation of this work the authors used ChatGPT and Google Gemini solely in order to improve of readability and language fluency of the writing. After using this tool/service, the authors reviewed and edited the content as needed and take full responsibility for the content of the publication.
        
    \bibliography{ifacconf}

    
\end{document}